\newtheorem{thm}{Theorem}
\newtheorem{defn}[thm]{Definition}
\theoremstyle{remark}
\newcommand{\A}{\mathscr A}
\renewcommand{\epsilon}{\varepsilon}
\newcommand{\X}{\mathscr X}
\newcommand{\Zpr}{\mathscr Z}
\newcommand{\Z}{\mathbb Z}
\newcommand{\Zd}{{\mathbb Z^d}}
\newcommand{\E}{\mathbb E}
\renewcommand{\H}{\text{\fontencoding{OT1}\fontfamily{phv}\fontseries{em}\selectfont{}H}}
\newcommand{\h}
{\text{\fontencoding{OT1}\fontfamily{phv}\fontseries{em}\selectfont{}h}}
\renewcommand{\P}{\mathbb P}
\def\mybf #1{{\bf{\emph{#1}}}}
\begin{document}
\title{Erasure entropies and Gibbs Measures}
\author{Aernout van Enter}
\address{Aernout van Enter: Johann Bernoulli Institute for
   Mathematics and Computer Science, University of Groningen, PO
Box 407, 9700 AK, Groningen, The Netherlands}
\email{aenter@phys.rug.nl}
\author{Evgeny Verbitskiy}
\address{Evgeny Verbitskiy: Mathematical Institute, University of Leiden, PO Box 9512, 2300 RA Leiden, The Netherlands}
\email{evgeny@math.leidenuniv.nl}
\address{Johann Bernoulli Institute for
   Mathematics and Computer Science, University of Groningen, PO
Box 407, 9700 AK, Groningen, The Netherlands}
\email{e.a.verbitskiy@rug.nl}
 
\begin{abstract}
Recently Verd\`u and Weissman introduced {\em erasure entropies},
which are meant to measure the information carried by
one or more symbols given all of the remaining symbols in the realization
of a random process or field.
A natural relation to {\em Gibbs measures} has also been
observed. In this short note we study this relation further, review a few 
earlier contributions from statistical mechanics, 
and provide the formula for the erasure entropy of a Gibbs measure
 in terms of the corresponding  potential.
 For some  $2$-dimensional Ising models, for which Verd\'u and Weissman 
suggested a numerical proedure, we show how to obtain  an exact formula
 for the erasure entropy.

\end{abstract}
\maketitle

\section{Introduction }

\subsection{Erasure entropies}

 Recently, Verd\`u and Weisman \cite{VW,VW2}, motivated by questions from 
information theory,  introduced and studied   erasure entropies.
In one dimension the description is as follows.
Let $\X=\{X_n\}_{n\in\Z}$ be a stationary process, taking values
in a finite alphabet $A$. The erasure entropy of a collection
of random variables $\{X_1,\ldots,X_n\}$ is defined as
\begin{equation}\label{eq:def}
\H^{-}(X_1,\ldots,X_n) = \sum_{i=1}^n \H(X_i|X_{\setminus i}),
\end{equation}
where $X_{\setminus i}=\{X_j:\ j=1,\ldots,n, \ j\ne i\}$, and 
$\H(X_i|X_{\setminus i})=\H(X_1,\ldots,X_n)-\H(X_{\setminus i})$ is the conditional
entropy.

\begin{defn}[\cite{VW}] \label{erasure_process}\mybf{The erasure entropy rate} of the
process $\X=\{X_n\}_{n\in\Z}$ is defined as
\begin{equation}\label{eq_ee}
\h^{-}(\X) = \lim_{n\to\infty} \frac 1n
\H^{-}(X_1,\ldots,X_n).
\end{equation}
\end{defn}
The erasure entropy measures  the information required
to recover an erased symbol, knowing all the other symbols.

Verd\`u and Weissman established the following properties of erasure entropies.
\begin{thm}[\cite{VW}] The limit in \eqref{eq_ee} exists. Moreover,
\begin{equation}
\h^{-}(\X) = \lim_{n\to\infty} \H(X_0|X_{-n},\ldots,X_{-1},
X_1,\ldots,X_n),
\end{equation}
and
\begin{equation}
\h^{-}(\X)\le \h(\X)= \lim_{n\to\infty} \H(X_0|X_{-1},\ldots,X_{-n }).
\end{equation}
If $\X=\{X_n\}_{n\in\Z}$ is a  $k$-step Markov chain then
\begin{equation}
\h(\X) = \frac{ \h^{-}(\X)+\H(X_1,\ldots,X_k|X_{-1},\ldots,X_{-k})}{k+1}.
\end{equation}
\end{thm}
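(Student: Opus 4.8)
The plan is to treat the three assertions separately. For the first, I would reindex the sum \eqref{eq:def} and then average. Writing $g(a,b):=\H(X_0\mid X_{-a},\dots,X_{-1},X_1,\dots,X_b)$ for $a,b\ge 0$, stationarity makes the $i$-th summand in \eqref{eq:def} equal to $g(i-1,\,n-i)$, so
\[
 \H^{-}(X_1,\dots,X_n)=\sum_{a+b=n-1}g(a,b).
\]
Since further conditioning cannot increase conditional entropy, $g$ is nonincreasing in each argument and takes values in $[0,\log|A|]$; hence $\ell:=\lim_{m\to\infty}g(m,m)$ exists, $g(a,b)\ge\ell$ for all $a,b$, and for each $\eps>0$ there is $M$ with $\ell\le g(a,b)<\ell+\eps$ once $a,b\ge M$. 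In $\frac1n\sum_{a+b=n-1}g(a,b)$ at most $2M$ of the $n$ terms have $\min(a,b)<M$, contributing at most $\frac{2M}{n}\log|A|\to 0$, while the rest lie in $[\ell,\ell+\eps)$; letting $n\to\infty$ and then $\eps\downarrow 0$ shows that the limit in \eqref{eq_ee} exists and equals $\ell=\lim_{m\to\infty}\H(X_0\mid X_{-m},\dots,X_{-1},X_1,\dots,X_m)$, the first asserted identity. For the inequality, $g(m,m)\le\H(X_0\mid X_{-1},\dots,X_{-m})$ because omitting $X_1,\dots,X_m$ from the conditioning can only increase entropy; letting $m\to\infty$ and invoking the classical fact that $\h(\X)=\lim_m\H(X_0\mid X_{-1},\dots,X_{-m})$ (the one-sided version of the averaging just done) gives $\h^{-}(\X)\le\h(\X)$.

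Now suppose $\X$ is $k$-step Markov. Then $\H(X_0\mid X_{-1},\dots,X_{-m})$ is constant for $m\ge k$, so $\h(\X)=\H(X_0\mid X_{-1},\dots,X_{-k})$, and by the Markov property and stationarity $\H(X_j\mid X_{j-1},\dots,X_{-k})=\h(\X)$ for $j=1,\dots,k$, whence the chain rule gives $\H(X_1,\dots,X_k\mid X_0,X_{-1},\dots,X_{-k})=k\,\h(\X)$. For the erasure rate the crucial observation is that for $a,b\ge k$ the conditional law $\Pr(X_0=\cdot\mid X_{-a},\dots,X_{-1},X_1,\dots,X_b)$ is almost surely a function of $(X_{-k},\dots,X_{-1},X_1,\dots,X_k)$ alone: factoring the law of the block $(X_{-a},\dots,X_b)$ as the law of $(X_{-a},\dots,X_{-a+k-1})$ times $\prod_{j=-a+k}^{b}p(x_j\mid x_{j-1},\dots,x_{j-k})$, the only factors containing $x_0$ are the $k+1$ transitions indexed by $j=0,1,\dots,k$, so after normalisation the conditional law depends only on $x_{-k},\dots,x_k$. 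Consequently $g(a,b)=g(k,k)$ for $a,b\ge k$, hence $\h^{-}(\X)=\ell=g(k,k)=\H(X_0\mid X_{-k},\dots,X_{-1},X_1,\dots,X_k)$. Finally I would expand $\H(X_0,X_1,\dots,X_k\mid X_{-1},\dots,X_{-k})$ by the chain rule in two orders: removing $X_0$ first gives $\h(\X)+\H(X_1,\dots,X_k\mid X_0,X_{-1},\dots,X_{-k})=(k+1)\,\h(\X)$, while removing $X_1,\dots,X_k$ first gives $\H(X_1,\dots,X_k\mid X_{-1},\dots,X_{-k})+\h^{-}(\X)$. Equating the two yields $(k+1)\,\h(\X)=\h^{-}(\X)+\H(X_1,\dots,X_k\mid X_{-1},\dots,X_{-k})$, which is the claimed formula.

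The step I expect to be the main obstacle is the stabilization used in the Markov case: one must verify carefully that the two-sided conditional law of a single coordinate of a $k$-step Markov chain truncates at distance $k$ — equivalently, that such a chain is a Markov field with interaction range $k$ — including the bookkeeping of the boundary factor in the block decomposition and the role of the hypotheses $a,b\ge k$. The remaining ingredients are just monotonicity of conditional entropy, stationarity, and the chain rule.
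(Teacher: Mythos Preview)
The paper does not give its own proof of this theorem: it is quoted as a result of Verd\`u and Weissman \cite{VW}, so there is nothing to compare your argument against. Your proof is correct. The first two parts are exactly the standard route---reindex by stationarity, use monotonicity of conditional entropy in each coordinate to get convergence of $g(m,m)$, and then a Ces\`aro argument with at most $2M$ boundary terms; the inequality $\h^{-}(\X)\le\h(\X)$ follows by simply dropping the future from the conditioning. For the $k$-step Markov identity, your two-way chain-rule expansion of $\H(X_0,\ldots,X_k\mid X_{-1},\ldots,X_{-k})$ is clean and is the usual way this is done.

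The one place worth a word of caution is the one you already flag: the claim that the two-sided conditional law of $X_0$ truncates at distance $k$. Your factorisation argument is the right idea, and it goes through verbatim whenever the $k$-step transition kernel is strictly positive (so that the product decomposition of the block law is valid and one can divide by the marginal). For general finite-state $k$-step Markov chains one should phrase the truncation as an almost-sure conditional-independence statement, $X_0\perp\!\!\!\perp (X_j:|j|>k)\mid (X_j:1\le|j|\le k)$, which follows from the one-sided Markov property applied once forwards and once backwards; this avoids any issue with zeros in the transition probabilities. With that refinement the argument is complete.
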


The interest in objects like the erasure entropy (\ref{eq:def}) and the corresponding
entropy rate (\ref{eq_ee}) arose in information theory in connection to coding problems
for  discrete memoryless erasure channels (DME).

Suppose that $\X=\{X_i\}$ is a discrete stationary process with values in a finite alphabet $\A=\{1,\ldots,N\}$,
and consider the process $\Zpr=\{Z_i\}$ with values in alphabet $\A\cup\{*\}$, defined for all $i$ by
$$
Z_i=\begin{cases} X_i,&\text{ if } E_i=0,\\
*,&\text{ if } E_i=1,
\end{cases}
$$
where $\{E_i\}$ is a sequence of Bernoulli random variables with
$$
\P(E_i=1)=p.
$$
The process $\{E_i\}$ is also assumed to be independent of $\X$. The resulting process $\Zpr$
is the response of the DME with parameter $p$ over $\X$.
The information rate at which the observer of $\Zpr$ has to be supplied in order to reconstruct
the erased symbols almost surely is given by 
$$
\h(\X|\Zpr) =\lim_{n\to\infty} \frac 1n \H(X_1,\ldots,X_n|Z_1,\ldots,Z_n)
$$
and, generically, is very difficult to evaluate. In \cite{VW2}, it was shown that
$$
\h(\X|\Zpr) \ge p\, \h^{-}(\X)\quad \forall p\in[0,1),
$$
and
$$
\h(\X|\Zpr)=p\, \h^{-}(\X)+\mathsf o(p)\quad\text{as }p\to 0.
$$
\section{Bilaterally deterministic processes: 
\newline
source of counterexamples}
A stationary process $\X=\{X_n\}_{n\in\Z}$  is called \mybf{
deterministic} if $X_0$ is measurable with respect to the  $\sigma$-algebra
$\mathcal B(X_k,k<0)$. By stationarity this means that for every $n\in\mathbb N$
$$
\mathcal B(X_k,k\le -n) = \mathcal B(\ldots, X_{-1},X_0,X_{1},\ldots),
$$
and hence $X_0$ can be predicted with zero error probability from its past $(\ldots,X_{-2},X_{-1})$.

Analogously, a stationary process $\X=\{X_n\}_{n\in\Z}$  is called \mybf{bilaterally
deterministic}  if  for every $n\in\mathbb N$,
$$
\mathcal B(X_k,|k|\le n) = \mathcal B(\ldots, X_{-1},X_0,X_{1},\ldots),
$$
which is equivalent to requiring that given the past $\{\ldots,X_{-n-1},X_{-n}\}$
and  the  future $\{X_{n},X_{n+1},\ldots\}$, the value $X_0$ can be reconstructed
with zero error probability. Therefore, any bilaterally deterministic process has
erasure entropy rate $0$.

Gurevi{\v{c}} \cite{G} constructed a first example of a non-deterministic, but 
bilaterally deterministic, stochastic process;
for a strongly mixing example see \cite{BDS}.
Any deterministic process is clearly bilaterally deterministic.
The converse is  false  in quite a strong sense, as the following somewhat counter-intuitive result obtained by Ornstein and Weiss \cite{OW} shows.

\begin{thm}\label{OWres}
Given any ergodic finite-state stationary stochastic process, there
is an isomorphic ergodic finite-state stationary stochastic process which is
bilaterally deterministic.
\end{thm}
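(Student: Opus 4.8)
The plan is to recast the statement in terms of generating partitions and then to construct one with the required property by a hierarchical recoding. Realise the given process as the ergodic measure preserving system $(X,\mathcal B,\mu,T)$ it defines, with $X=A^{\Z}$, $T$ the shift, and coordinate partition $\alpha_0=\{[a]:a\in A\}$ finite and generating; since $A$ is finite, $\h(\X)<\infty$. Any finite partition $\beta$ of $X$ defines an ergodic finite-state stationary process; this process is isomorphic to $\X$ precisely when $\beta$ is generating, i.e.\ $\bigvee_{i\in\Z}T^i\beta=\mathcal B$ mod $\mu$; and, unravelling the definition in the excerpt, it is bilaterally deterministic precisely when $\beta\subseteq\sigma\!\big(T^i\beta:|i|\ge n\big)$ mod $\mu$ for every $n\ge1$. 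So it suffices to produce one finite generating partition enjoying this last property. If $\h(\X)=0$ the coordinate process is already deterministic ($X_0$ is a.s.\ a function of its past), hence bilaterally deterministic, and $\beta=\alpha_0$ works; so assume $0<\h(\X)<\infty$.

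I would build $\beta$ as a limit of recodings $\beta_k$ of $\X$ through nested Rokhlin towers. Enlarge the alphabet to a finite set $B$, slightly larger than $A$ so that each coordinate can carry a little overhead, fix heights $N_1\mid N_2\mid\cdots$ growing extremely fast, and, applying Rokhlin's lemma repeatedly, endow $X$ off a small set with nested towers: $\mu$-a.e.\ point lies, for every $k$, in a \emph{level-$k$ block} of $N_k$ consecutive times, each such block a concatenation of $N_k/N_{k-1}$ level-$(k-1)$ blocks, and the block endpoints at all levels carry a locally recognisable nested marker pattern of vanishing density. Now recode inductively: at level $k$ reserve a sparse set of \emph{level-$k$ check positions}, of density $\delta_k$ with $\sum_k\delta_k\le1$ and the sets disjoint across levels, and overwrite the symbol at each such position by a \emph{check symbol} — a sum in a fixed finite abelian group of the already-recoded symbols lying in a surrounding block, the blocks chosen with staggered left endpoints and lengths of order $N_k$ — while the remaining positions keep their current symbol. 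Put $\beta=\lim_k\beta_k$.

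For the two required properties: \emph{isomorphism with $\X$} holds because, although each check symbol destroys the datum at its position, the checks of a given level form (within each block) a triangular linear system in those erased data symbols, so one can solve for them and, working down through the levels, reconstruct the original $\X$-name from the $\beta$-name; thus $\beta$ generates $\mathcal B$. For \emph{bilateral determinism}, fix $n$: from the $\beta$-name restricted to $\{|j|\ge n\}$ one reads, via the markers, the whole nested block structure around $0$ and every check symbol at a position $|j|\ge n$ whose defining block reaches into the window $\{-n{+}1,\dots,n{-}1\}$; selecting, among the abundant level-$k$ checks (for $N_k$ large) lying just outside the window, a family whose defining blocks begin at the successive points $-n{+}1,\dots,n{-}1$, one obtains a triangular system of group equations for the unknown symbols of the window, which therefore solves — so every $\beta$-symbol in the window, $Y_0$ included, is $\sigma(T^i\beta:|i|\ge n)$-measurable.

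The heart of the matter — and where I expect the real difficulty — is designing the checks to meet two opposing demands at once. A bounded alphabet forbids simply keeping the data and appending redundant copies (copies would need positive density \emph{per level}), so the checks must overwrite data while the code nevertheless stays globally invertible; and a parity over a block containing position $0$ inevitably mixes in coordinates at distance $<n$, so recovering the window from its complement forces a whole staggered family of checks of varying lengths and positions, all at distance $\ge n$. Reconciling the two is a combinatorial/percolation problem: the number of level-$k$ check positions near any given window is of order $\delta_kN_k$, which one arranges to tend to infinity even as $\delta_k\to0$, so such staggered families can almost surely be found near every window at every scale; and one must finally verify, via a Borel–Cantelli estimate on the $\delta_k$, that the infinitely many recodings compose to an a.e.-defined bijection. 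Carrying out this bookkeeping is precisely the content of Ornstein and Weiss's construction.
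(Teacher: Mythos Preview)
The paper does not give its own proof of this theorem: it is quoted from Ornstein and Weiss \cite{OW}, and the only elaboration offered is the remark immediately following the statement that the proof in \cite{OW} amounts to constructing a finite \emph{generating} partition $\beta=\{B_1,\ldots,B_m\}$ of $A^{\Z}$ whose associated factor process under the shift is bilaterally deterministic. Your opening paragraph is exactly this reduction, so at the level at which the paper engages the result you are in full agreement with it.

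Everything after that in your proposal --- the nested Rokhlin towers, the sparse parity-check positions at each scale, the triangular systems used both to invert the recoding and to reconstruct the erased window, and the Borel--Cantelli control on the $\delta_k$ --- goes well beyond anything the paper attempts; the paper simply cites \cite{OW} and moves on. Your sketch is in the spirit of the original construction (redundancy spread over long blocks so that any finite gap can be filled from outside, while keeping the code globally invertible on a finite alphabet), and you correctly flag the genuine tension between overwriting data and retaining invertibility as the crux. Since you ultimately also defer the detailed bookkeeping to \cite{OW}, there is no discrepancy to report: your proposal and the paper point to the same source for the same reason, with your version supplying considerably more of the heuristics.
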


Isomorphism here is understood in measure-theoretical sense. More specifically,
in \cite{OW} a finite {\it generating} partition $\beta=\{B_1,\ldots,B_m\}$ of $A^{\Z}$ is built such that
the corresponding factor process $\mathcal Y=\{Y_n\}$ given by
$$
Y_n=j\quad \iff \quad \sigma^{n}(\{X_k\}_{k\in\Z})\in B_j\quad \text{for all } n,
$$
where $\sigma:A^\Z\to A^\Z$ is the left shift,
 is bilaterally deterministic. The fact that the partition $\beta$ is generating
 means that given $\{Y_n\}$ one can reconstruct $\{X_n\}$ almost surely.

Theorem \ref{OWres} implies that the erasure entropy rate $\h^{-}(\X)$ is not
a measure-theoretic invariant of the process $\X=\{X_n\}$, while the entropy
rate $\h(\X)$ is  well known to be such an invariant.

\section{Erasure entropies of Gibbs states}
 A natural relation between
erasure entropies and  thermodynamic (Gibbs)  states was observed in the 
original paper \cite{VW}. In particular, the application to Gibbs sampling 
of Markov random fields, a well-known problem in stochastic image analysis, 
served as a motivation for considering this question also on 
higher-dimensional lattices, see in particular their section {\bf 4F}, in 
which they provide an approximate expression for the erasure entropy of a 
two-dimensional nearest-neighbour Ising model.
Let us elaborate on this issue a bit further.

\subsection{Gibbs measures}
Gibbs measures for a given interaction $\Phi$, according to the definition introduced by 
Dobrushin, Lanford and Ruelle, are defined as measures whose conditional 
probabilities of finite-volume configurations $X_{\Lambda}$, given 
external configurations $X_{\Lambda^c}$, are of the the Gibbsian form
$$
\P(X_\Lambda|X_{\Lambda^c}) = \frac 1{Z_{X_{\Lambda^c}}} \exp\Bigl( -H_\Lambda^\Phi(X)\Bigr)=:
\gamma_\Lambda^{\Phi}(X_\Lambda|X_{\Lambda^c})
$$
where the Hamiltonian
$H_{\Lambda}^\Phi$ contains all interaction terms having a non-empty 
intersection with $\Lambda$:
$$
H_\Lambda^\Phi(X)=\sum_{A\cap\Lambda\ne\varnothing} \Phi_A(X),
$$
where the normalization constant $Z_{X_{\Lambda^C}}$ is the appropriate
partition function.
\newline
This should hold for all volumes $\Lambda \subset \Zd$, and all choices of the 
configurations $X_{\Lambda}$ and $X_{\Lambda^c}$. See e.g. \cite{Geo,EFS}.

\subsection{Erasure entropy rates of random fields}
For a finite volume $\Lambda\subseteq\Z^d$ and any
$\mathbf k\in\Z^d$, we describe the shifted volume $\mathbf k+\Lambda=\{\mathbf k
+\mathbf i|\mathbf i\in\Lambda\}$.

\begin{defn}\label{def_ee_Zd}
Let $\{X_n\}_{n\in\Zd}$ be a translation-invariant
random field taking values in
a finite alphabet $A$, and let $\P$ be the corresponding
probability measure on $A^\Zd$.
The erasure entropy rate of $\P$ for the finite volume $\Lambda\subset\Z^d$
is defined as
\begin{equation}\label{defZd}
\h_\Lambda^{-}(\P) = \lim_{V\nearrow\Z^d} \frac 1{|V|}
\sum_{\mathbf k\in V} \H\bigl( X_{(\mathbf k+\Lambda)\cap V}|X_{V\setminus(\mathbf
k+\Lambda)}\bigr)
\end{equation}
\end{defn}
The interpretation of the erasure entropy for a finite volume remains unchanged: $\h_\Lambda^{-}(\P)$ is the number of bits required to
reconstruct the configuration on $\Lambda$ given the configuration outside of $\Lambda$. In fact, it is also convenient to consider
the  erasure entropy  normalized by the volume
$\frac 1{|\Lambda|}{\h}^{-}_\Lambda(\P)$.

\begin{thm}\label{formula} The limit (\ref{defZd}) exists, and
\begin{equation}\label{lim1}
\h_\Lambda^{-}(\P) =\lim_{V\nearrow\Zd} \H(X_\Lambda|X_{V\setminus\Lambda})
=-\int \log \P(X_\Lambda|X_{\Lambda^c}) \P(dX).
\end{equation}
\end{thm}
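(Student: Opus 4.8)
The plan is to establish the two equalities in \eqref{lim1} separately and in the order written: first identify the ``bulk'' limit $\lim_{V\nearrow\Zd}\H(X_\Lambda|X_{V\setminus\Lambda})$ with the integral on the right, and then deduce that the Ces\`aro average in \eqref{defZd} converges to the same value by a surface-versus-volume estimate.

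\textbf{Step 1: the bulk limit.} For finite $W\supseteq\Lambda$ write $a_W:=\H(X_\Lambda|X_{W\setminus\Lambda})$. Enlarging $W$ only refines the conditioning $\sigma$-algebra $\sigma(X_{W\setminus\Lambda})$, so $W\mapsto a_W$ is non-increasing, and $a_W\ge\H(X_\Lambda|X_{\Lambda^c})=:a_\infty$ because $\sigma(X_{W\setminus\Lambda})\subseteq\sigma(X_{\Lambda^c})$. Along any exhaustion $W_n\nearrow\Zd$ one has $\sigma(X_{W_n\setminus\Lambda})\uparrow\sigma(X_{\Lambda^c})$, so the increasing martingale convergence theorem gives $\P(X_\Lambda=\alpha\mid X_{W_n\setminus\Lambda})\to\P(X_\Lambda=\alpha\mid X_{\Lambda^c})$ a.s.\ for each of the finitely many $\alpha\in A^\Lambda$; since $p\mapsto -p\log p$ is bounded and continuous on $[0,1]$, bounded convergence turns this into $a_{W_n}\to a_\infty$. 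Combined with monotonicity this gives the quantitative form needed below: for every $\eps>0$ there is a finite box $Q_\eps\supseteq\Lambda$ with $0\le a_W-a_\infty<\eps$ for all finite $W\supseteq Q_\eps$. In particular $\lim_{V\nearrow\Zd}\H(X_\Lambda|X_{V\setminus\Lambda})=a_\infty$, and unwinding the definition of conditional entropy of a finite block identifies $a_\infty$ with $-\int\log\P(X_\Lambda|X_{\Lambda^c})\,\P(dX)$, which is the second equality of \eqref{lim1}.

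\textbf{Step 2: the Ces\`aro average.} Here I read $V\nearrow\Zd$ in the van Hove (F\o lner) sense, as is implicit in \eqref{defZd}, so that $\bigl|V\setminus\{\mathbf k\in V:\mathbf k+Q\subseteq V\}\bigr|=o(|V|)$ for every fixed finite $Q$. Fix $\eps>0$ and the box $Q_\eps$ of Step 1. For $\mathbf k$ with $\mathbf k+\Lambda\subseteq V$, translation invariance of $\P$ (shift by $-\mathbf k$) gives
\[
\H\bigl(X_{(\mathbf k+\Lambda)\cap V}\,\big|\,X_{V\setminus(\mathbf k+\Lambda)}\bigr)=\H\bigl(X_\Lambda\,\big|\,X_{(V-\mathbf k)\setminus\Lambda}\bigr)=a_{V-\mathbf k}.
\]
Hence for $\mathbf k$ in the ``good'' set $G_\eps(V):=\{\mathbf k\in V:\mathbf k+Q_\eps\subseteq V\}$ (where $V-\mathbf k\supseteq Q_\eps$, and a fortiori $\mathbf k+\Lambda\subseteq V$) the corresponding summand lies in $[a_\infty,a_\infty+\eps)$, while for each of the $o(|V|)$ remaining $\mathbf k\in V$ the summand is the conditional entropy of at most $|\Lambda|$ symbols and so is bounded by $|\Lambda|\log|A|$. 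Sandwiching the average $\frac1{|V|}\sum_{\mathbf k\in V}\H(X_{(\mathbf k+\Lambda)\cap V}|X_{V\setminus(\mathbf k+\Lambda)})$ between these two bounds, then letting $V\nearrow\Zd$ (so the $o(|V|)$ fraction vanishes) and finally $\eps\downarrow0$, I conclude that the limit in \eqref{defZd} exists and equals $a_\infty$, which is the remaining equality of \eqref{lim1}.

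\textbf{The main obstacle.} The bookkeeping in Step 2 is routine once the van Hove hypothesis on $V\nearrow\Zd$ is made explicit: without control of the surface-to-volume ratio the boundary blocks need not be negligible, whereas along cubes it is automatic. The only genuinely analytic point is the exchange of limits in Step 1 — the continuity of conditional entropy along an increasing family of conditioning $\sigma$-algebras — and this is exactly where finiteness of the alphabet does the work, converting a.s.\ martingale convergence of the conditional probabilities into convergence of the entropies via bounded convergence. I would also emphasise that \eqref{lim1} exhibits $\h_\Lambda^{-}(\P)$ as a functional of the single family of conditional probabilities $\P(X_\Lambda\mid X_{\Lambda^c})$, which is precisely the bridge to the DLR description of Gibbs measures used in the rest of the section.
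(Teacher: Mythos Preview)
Your argument is correct and is in fact a careful working-out of what the paper only sketches: the paper does not give a detailed proof of this theorem but simply remarks that such results ``are rather standard and rely on application of (strong) subadditivity, martingale convergence and ergodic theorems'' with references to \cite{VW,vE,T}. Your Step~1 is exactly the martingale-convergence step the paper alludes to, and your monotonicity observation is the relevant consequence of (sub)additivity of entropy.

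The only place you deviate from the paper's hint is Step~2: where the paper gestures toward ``ergodic theorems'', you instead run a direct van Hove (F\o lner) surface-versus-volume argument, bounding the good-site contributions by Step~1 and absorbing the $o(|V|)$ boundary terms via the crude bound $|\Lambda|\log|A|$. This is more elementary and avoids invoking any multiparameter ergodic theorem; it also makes transparent exactly where the van Hove hypothesis on $V\nearrow\Zd$ enters, a point the paper leaves implicit. Conversely, phrasing Step~2 through an ergodic theorem (applied to the a.s.\ limit $-\log\P(X_\Lambda|X_{\Lambda^c})$) would package the same estimate more abstractly. Either route is fine here; yours has the virtue of being self-contained.
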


For Gibbs measures one can say more.

\begin{thm}\label{gibbsformula}
If $\P$ is a translation invariant Gibbs measure on $A^\Zd$
for interaction $\Phi$, then
the erasure entropy rate can be expressed in terms of the
Gibbs specification $\gamma_\Lambda^\Phi$ as follows:
\begin{equation}\label{lim2}
{\h}^{-}_\Lambda(\P) =-\int \log \gamma_{\Lambda}^{\Phi}(X_\Lambda|X_{\Lambda^c})\, \P(dX).
\end{equation}
Hence, ${\h}^{-}_\Lambda(\cdot)$ is an affine functional
on the set of Gibbs measures consistent with specification $\gamma^\Phi$.
Moreover,
\begin{equation}\label{lim3}
\lim_{\Lambda\nearrow\Zd} \frac {1}{|\Lambda|}
{\h}^{-}_\Lambda(\P)=
\h(\P).
\end{equation}
\end{thm}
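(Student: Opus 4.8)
The plan is to prove the three assertions of Theorem \ref{gibbsformula} in order, leaning heavily on Theorem \ref{formula}, which already identifies $\h_\Lambda^-(\P)$ with $-\int\log\P(X_\Lambda|X_{\Lambda^c})\,\P(dX)$. For the first formula \eqref{lim2}, the key point is simply that $\P$ being a Gibbs measure for $\Phi$ means, by the DLR equations, that a version of the conditional probability $\P(X_\Lambda|X_{\Lambda^c})$ is given \emph{exactly} by the specification kernel $\gamma_\Lambda^\Phi(X_\Lambda|X_{\Lambda^c})$ for $\P$-almost every configuration. Substituting this identity into \eqref{lim1} gives \eqref{lim2} immediately. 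One should remark that $\gamma_\Lambda^\Phi$ is jointly continuous in $(X_\Lambda,X_{\Lambda^c})$ (for a reasonable, e.g. absolutely summable, interaction $\Phi$) and bounded away from $0$ on the finite volume $\Lambda$, so $-\log\gamma_\Lambda^\Phi$ is a bounded continuous function and the integral is well defined and finite; this also makes the $\P$-dependence manifestly linear.

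For the affineness claim: fix the specification $\gamma^\Phi$ and let $\P = t\P_1 + (1-t)\P_2$ with $\P_1,\P_2$ both consistent with $\gamma^\Phi$ and $t\in[0,1]$. Since all three measures share the \emph{same} kernel $\gamma_\Lambda^\Phi$, formula \eqref{lim2} expresses $\h_\Lambda^-(\cdot)$ as the integral of one fixed function $-\log\gamma_\Lambda^\Phi(X_\Lambda|X_{\Lambda^c})$ against the measure; integration against a convex combination of measures is the corresponding convex combination of integrals, so $\h_\Lambda^-(\P) = t\,\h_\Lambda^-(\P_1) + (1-t)\,\h_\Lambda^-(\P_2)$. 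The only subtlety worth flagging is that this uses that the \emph{same} $\gamma^\Phi$ works for all three measures, which is exactly the hypothesis "consistent with specification $\gamma^\Phi$"; the functional is \emph{not} affine on the set of all Gibbs measures regardless of specification.

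For \eqref{lim3} I would compare the erasure entropy rate with the ordinary (Kolmogorov--Sinai) entropy rate $\h(\P)$, which for a translation-invariant measure can be written as $\h(\P) = \lim_{\Lambda\nearrow\Zd}\frac{1}{|\Lambda|}\H(X_\Lambda)$, and more usefully via the conditional-entropy formula $\h(\P) = -\int\log\P(X_0|X_{<0})\,\P(dX)$ for a fixed lexicographic-type past. The strategy is a telescoping/chain-rule argument: write $\H(X_\Lambda)$ by peeling off sites one at a time in a fixed order, obtaining $\H(X_\Lambda) = \sum_{i}\H(X_{s_i}\mid X_{s_1},\dots,X_{s_{i-1}})$; as $\Lambda\nearrow\Zd$ with $\Lambda$ in the sense of van Hove (so the boundary is negligible), almost all sites $s_i$ have both "earlier" neighbours inside $\Lambda$ \emph{and} the full complementary configuration available in the limit, so $\frac{1}{|\Lambda|}\H(X_\Lambda)$ and $\frac{1}{|\Lambda|}\sum_{\mathbf k\in\Lambda}\H(X_{\mathbf k}|X_{\Lambda\setminus\mathbf k})$ have the same limit. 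Concretely, $\frac{1}{|\Lambda|}\h_\Lambda^-(\P) = -\int\frac{1}{|\Lambda|}\log\P(X_\Lambda|X_{\Lambda^c})\,\P(dX)$ by \eqref{lim1}, and one shows this converges to $-\int\log\P(X_0|X_{\Zd\setminus\{0\}})\,\P(dX)$, which by the martingale convergence theorem (revealing the complement from the inside out, and using that for a Gibbs measure the one-sided and two-sided tail conditional entropies coincide in the limit — this is where Gibbsianness, via quasilocality of $\gamma^\Phi$, is genuinely used) equals $\h(\P)$.

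The main obstacle I expect is \eqref{lim3}, specifically making rigorous that $-\int\log\gamma_{\{0\}}^\Phi(X_0|X_{\Zd\setminus\{0\}})\,\P(dX)$ equals the ordinary entropy rate $-\int\log\P(X_0|X_{<0})\,\P(dX)$: a priori the erasure quantity conditions on \emph{all} other sites while the entropy rate conditions only on a half-space, and Theorem \ref{OWres} warns us these can differ wildly for general processes. The resolution must exploit that for a Gibbs (hence quasilocal) measure the specification is continuous, so conditioning on the full complement is an honest limit of finite-volume conditionings and the boundary contributions vanish in the van Hove limit $\frac{1}{|\Lambda|}(\cdot)$; I would isolate this as a lemma (the surface-to-volume ratio tending to zero kills the discrepancy between "condition on the rest of $\Lambda$" and "condition on all of $\Zd\setminus\Lambda$" after division by $|\Lambda|$), and otherwise the argument is bookkeeping with the chain rule and dominated convergence.
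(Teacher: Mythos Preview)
Your treatment of \eqref{lim2} and of the affineness claim is correct and matches the paper's own (very brief) argument: both follow immediately from the DLR equations, which identify $\P(X_\Lambda\mid X_{\Lambda^c})$ with the fixed continuous function $\gamma_\Lambda^\Phi(X_\Lambda\mid X_{\Lambda^c})$.

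For \eqref{lim3}, however, your plan contains a genuine error. You propose to show that $\frac{1}{|\Lambda|}\h_\Lambda^-(\P)$ converges to the single-site quantity $-\int\log\P(X_0\mid X_{\Zd\setminus\{0\}})\,\P(dX)=\h_{\{0\}}^-(\P)$, and then to argue that for Gibbs measures this equals $\h(\P)=-\int\log\P(X_0\mid X_{<0})\,\P(dX)$. The second identity is \emph{false}: for a nontrivial Gibbs measure one has $\h_{\{0\}}^-(\P)<\h(\P)$ strictly, as the explicit Ising computations later in the paper illustrate. The discrepancy between conditioning on the full complement and conditioning on a lexicographic past is \emph{not} a boundary effect that a van Hove surface-to-volume argument can kill; it is present at every bulk site. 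Correspondingly, your telescoping claim that $\frac{1}{|\Lambda|}\H(X_\Lambda)$ and $\frac{1}{|\Lambda|}\sum_{\mathbf k\in\Lambda}\H(X_{\mathbf k}\mid X_{\Lambda\setminus\mathbf k})$ share the same limit is also incorrect: the first tends to $\h(\P)$, the second to $\h_{\{0\}}^-(\P)$.

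The paper takes an entirely different route for \eqref{lim3}. It writes out the specification explicitly,
\[
-\frac{1}{|\Lambda|}\log\gamma_\Lambda^\Phi(X_\Lambda\mid X_{\Lambda^c})
=\frac{1}{|\Lambda|}H_\Lambda^\Phi(X)+\frac{1}{|\Lambda|}\log Z(X_{\Lambda^c}),
\]
integrates against $\P$, and then invokes a standard thermodynamic-limit result (Georgii, Corollary~15.35): the first term converges to the specific energy and the second to the pressure, and for a translation-invariant Gibbs measure their sum is exactly $\h(\P)$ by the variational principle. This bypasses any comparison of one-sided and two-sided conditioning altogether. If you want to repair your chain-rule approach, the correct telescoping is of $-\log\P(X_\Lambda\mid X_{\Lambda^c})$ itself: the summands are $-\log\P(X_{s_i}\mid X_{\Lambda^c},X_{s_1},\dots,X_{s_{i-1}})$, whose conditioning set for a bulk site is a lexicographic past \emph{together with} a far-away piece of the future, and one must show that quasilocality makes that distant future negligible---but this is a different (and more delicate) statement than the one you wrote down.
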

\begin{proof}
Proofs of such results 
are rather standard and rely on application of (strong) subadditivity, 
martingale convergence and ergodic theorems, \cite{VW,vE,T}.
Expression (\ref{lim2}) is an immediate consequence of the
Gibbs property of $\P$. To show (\ref{lim3}) we note that
$$\aligned
-\frac 1{|\Lambda|}\int &
\log \gamma_{\Lambda}^{\Phi}(X_\Lambda|X_{\Lambda^c})\, \P(dX)\\
&=\int \frac 1{|\Lambda|}H_\Lambda^\Phi(X)\P(dX)+\int
 \frac 1{|\Lambda|}\log Z(X_{\Lambda^c}) \P(dX)\to \h(\P)
\endaligned
$$
by \cite{G}*{Corollary 15.35}.
\end{proof}

Let us conclude this section with a number of remarks on the 
relation between erasure entropies and  the theory of Gibbs states of 
Equilibrium Statistical
Mechanics.

\medskip
(A) By (\ref{lim1}), the erasure entropy rate can  also be seen as the difference in entropies of
the system with and without a finite number of spins inside an interval or a volume.
As such, erasure entropy is an example of an \emph{inner entropy} 
(in some literature called local or conditional entropy ) introduced earlier
in Statistical Mechanics, see \cite{FS,S} or \cite{T, T2}.

Such conditional entropies have been used before to establish ``local'' or
``inner'' variational principles \cite{BR, S, S2, T, T2}. 
From (\ref{lim3}) it 
is known that for Gibbs measures  the limit conditional entropy density and the
standard entropy density coincide \cite{vE,T2}. This property is obviously not true in general,  due to the Gurevi{\v{c}} and Ornstein-Weiss
examples, see e.g. \cite{T, T2}.

\medskip
(B) One extra remark to be made  is that Gibbs
measures can be characterised, not only by a local variational principle for 
general volumes, but even by a single-site variational principle,
using only the (Verd\`u-Weissman) single-site erasure entropy.
This was proven in a short paper by J. Fernando Perez and R.H. Schonmann 
\cite{FP-S}. More specifically, suppose $\Phi=\{\Phi_\Lambda\}$ is an interaction and $\P$ is an arbitrary
probability measure (state). The \emph{free-energy content of a region $\Lambda$ in a state $\P$}
is defined by
$$
F_\Lambda(\P)=\int H_\Lambda d\P+ \h_\Lambda^{-}(\P).
$$
Here we do not assume $\P$ to be translation invariant, and hence the erasure entropy $\h^{-}_\Lambda(\P)$
is understood as the limit 
$$
\h^{-}_\Lambda(\P)=\lim_{V\nearrow \Z^d} \H(X_\Lambda|X_{V\setminus\Lambda}).
$$
Following Sewell \cite{S2}, a state $\P$ is called \mybf{locally thermodynamically stable} if
for every finite volume $\Lambda$ one has
$$
F_\Lambda(\tilde \P) \ge F_\Lambda( \P)
$$
for all measures $\tilde\P$ such that $\tilde\P_{\Lambda^c}=\P_{\Lambda^c}$.
In other words, the locally thermodynamically stable state minimizes the free-energy content
with respect to variations of the state inside $\Lambda$, for all $\Lambda$.

It turns out that for  a fairly large class of finite-norm interactions, from the fact that $\P$ minimizes
$F_\Lambda$, one can deduce that $\P$ satisfies the  Dobrushin-Lanford-Ruelle (DLR)
conditions on $\Lambda$, see \cite[Theorem 2]{FP-S}. 
Moreover, it also known that if a translation-invariant state $\P$ satisfies DLR condition on $\Lambda=\{0\}$, then $\P$ is a Gibbs state. (Or, more generally, if the DLR conditions hold for all single sites they hold in general).

Hence,  we  can formulate a variational principle for Gibbs states in terms of the erasure entropy $\h_{0}^-(\P)$.

Although \cite{FP-S} formulate their result for finite-range interactions, it is easy to see that their arguments generalise beyond that, due to the  analysis in \cite{S,BR} of the LTS condition and its equivalence to other equilibrium conditions.

\medskip
(C) It follows from the bilaterally deterministic examples discussed in 
Section 2 that neither the erasure entropy, nor the Gibbs property are 
measure-theoretical invariants, c.f. \cite{dHS}.


\section{ Computing erasure entropies 
for Ising models in two dimensions}
In \cite{VW2} a question arose about computation of a 1-site
(i.e., $\Lambda=\{0\}$) erasure entropy rate for the standard nearest-neighbour
Ising model on the two-dimensional integer lattice. 
In fact,  in contrast to what was suggested there, 
this erasure entropy rate can be computed explicitly. 

Because of the Markov property of the Gibbs measures of the 2-dimensional
Ising model
(immediate from the nearest-neighbour character of the
interaction), it is sufficient to consider
nearest-neighbour environments of the spin at the center. Let us denote
the 5 spins we will need to consider
$\sigma_C, \sigma_E, \sigma_N,\sigma_W,\sigma_S$
(for Center, East, North, West, and South).

\begin{figure}[ht]
\begin{tabular}{|c|c|c|}
\hline
 & $\sigma_{N}$ &\\ \hline
$\sigma_{W}$ & $\sigma_C$ & $\sigma_E$\\\hline
& $\sigma_{S}$ &\\ \hline
\end{tabular}
\caption{Ising spins}
\end{figure}

We first remark that the probabilities of the 16 configurations of
the E., N., W., and S. spins, in the spin-flip-invariant infinite-volume Gibbs measure 
$\P= \frac{1}{2}
(\P^{+} + \P^{-})$,  fall into
4 classes.

\begin{figure}[ht]
\begin{tabular}{|c|c|c|}
\hline
 & $+$ &\\ \hline
$+$ & $\cdot$ & $+$\\
\hline
& $+$ &\\ \hline
\end{tabular}
\begin{tabular}{|c|c|c|}
\hline
 & $+$ &\\ \hline
$+$ & $\cdot$ & $-$\\
\hline
& $+$ &\\ \hline
\end{tabular}
\begin{tabular}{|c|c|c|}
\hline
 & $+$ &\\ \hline
$-$ & $\cdot$ & $-$\\
\hline
& $+$ &\\ \hline
\end{tabular}
\begin{tabular}{|c|c|c|}
\hline
 & $+$ &\\ \hline
$+$ & $\cdot$ & $-$\\
\hline
& $-$ &\\ \hline
\end{tabular}
\caption{Four boundary types (upto rotations and spin-flips)}
\end{figure}

Denote by $P_i$, $i=1,\ldots,4$ the probability of the corresponding boundary configuration.
The appropriate  combination of these $4$ probabilities adds up to $1$:
\begin{equation}\label{E1}
2P_1+8P_2+2P_3+4P_4=1.
\end{equation}
By (\ref{lim2}), the erasure entropy $\h_{\{0\}}^-(\P)$ is given by:
$$\aligned
\h_{\{0\}}^-(\P)=& 2P_1 \cdot h\Bigr( \frac{ e^{4J}}{e^{4J}+e^{-4J}}\Bigr)+8P_2\cdot h\Bigl(\frac{ e^{2J}}{e^{2J}+e^{-2J}}\Bigr)\\
&\phantom{1111111}+(2P_3+4P_4)\cdot h\Bigl(\frac 12\Bigr),\\
\endaligned
$$
where $h(p)=-p\log p-(1-p)\log(1-p)$ for $p\in [0,1]$, and $J$ 
is the nearest-neighbour coupling constant.
Note that in \cite{VW2} the third and the fourth boundary types are not distinguished (since
they contribute equally to the entropy). However, to actually compute the erasure entropy $\h_{\{0\}}^-(\P)$ we must distinguish these types.

By Pfaffian methods (albeit in somewhat implicit form, see \cite{KBT}), one can compute   three relevant correlation functions, namely the 4-point
function and the
two pair correlation functions one can build with these four spins,  that is
$\E(\sigma_E \sigma_W),
\E(\sigma_E \sigma_N)$,
and $\E (\sigma_E \sigma_N \sigma_W \sigma_S)$.
Finally, we derive remaining equations for probabilities $P_i$, by noting that
\begin{eqnarray}
\E (\sigma_E \sigma_N \sigma_W \sigma_S) &=& 2P_1 -8P_2 +2P_3 +4P_4,     \label{E2} \\
\E(\sigma_E \sigma_W) &=&  2P_1 +2 P_3 -4P_4,  \label{E3}  \\
 \E(\sigma_E \sigma_N) &=& 2P_1  -2P_3. \label{E4}
\end{eqnarray}
As an immediate consequence, one gets
$$\aligned
P_1&=\frac{1+\E (\sigma_E \sigma_N \sigma_W \sigma_S) +2\E (\sigma_E \sigma_W)+
4 \E (\sigma_E \sigma_N)}{16},\\
P_2&=\frac{1-\E (\sigma_E \sigma_N \sigma_W \sigma_S) }{16},\\
\endaligned
$$
which  is sufficient to express the erasure entropy in terms of correlation functions.

In conclusion, we note that for obvious symmetry reasons
$$
\h_{\{0\}}^-(\P^-)=\h_{\{0\}}^-(\P^+),
$$
and,  by the affine property, the erasure entropies of both phases  coincide with $\h_{\{0\}}^-(\P)$ obtained above. For the same reasons, the erasure entropy of any Gibbs measure of the Ising model  coincides with $\h_{\{0\}}^-(\P)$.

\bigskip

Definitions of erasure entropies extend readily to more general lattices. 
For the correspoding nearest-neighbour Ising models the Pfaffian method of computing correlations 
will apply to some other two-dimensional
planar lattices, see again \cite{BKT, BTKM, KBT} and specifically \cite{BMThoney}.
In case of the hexagonal lattice, we propose another approach, 
which is less general, in the sense that everything we need 
can be derived from just knowing the free energy density, but already  
leads to an explicit analytic expression of the erasure entropy.

Consider the two-dimensional hexagonal (honeycomb) lattice. One now needs to consider only three neighbours -- denoted $\sigma_I$, $\sigma_{II}$,  and 
$\sigma_{III}$, of the central spin $\sigma_C$. There are obviously   only 2 boundary types.

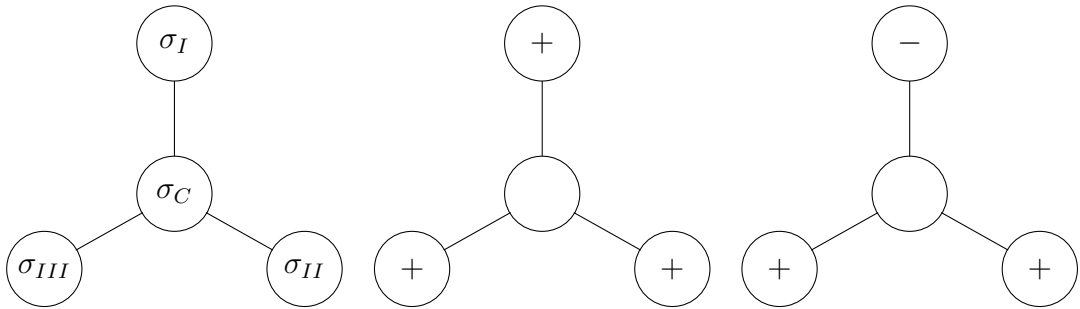
\begin{figure}[h]
 \begin{tikzpicture} 
\draw (0,.5) -- (0,1.5);  \draw (-0.42,-.25) -- (-1.31,-.75);  \draw (0.42,-.25) -- (1.31,-.75); 
\draw (0,0) circle (0.5cm); \draw(0,0) node {$\sigma_C$};
\draw (0,2) circle (0.5cm); \draw(0,2) node {$\sigma_{I}$};
\draw(1.73,-1) circle (0.5cm); \draw(1.73,-1) node {$\sigma_{II}$};
\draw(-1.73,-1) circle (0.5cm); \draw(-1.73,-1) node {$\sigma_{III}$};
\end{tikzpicture}\quad
\begin{tikzpicture} 
\draw (0,.5) -- (0,1.5);  \draw (-0.42,-.25) -- (-1.31,-.75);  \draw (0.42,-.25) -- (1.31,-.75); 
\draw (0,0) circle (0.5cm);  
\draw (0,2) circle (0.5cm); \draw(0,2) node {$+$};
\draw(1.73,-1) circle (0.5cm); \draw(1.73,-1) node {$+$};
\draw(-1.73,-1) circle (0.5cm); \draw(-1.73,-1) node {$+$};
\end{tikzpicture}\quad
\begin{tikzpicture} 
\draw (0,.5) -- (0,1.5);  \draw (-0.42,-.25) -- (-1.31,-.75);  \draw (0.42,-.25) -- (1.31,-.75); 
\draw (0,0) circle (0.5cm);  
\draw (0,2) circle (0.5cm); \draw(0,2) node {$-$};
\draw(1.73,-1) circle (0.5cm); \draw(1.73,-1) node {$+$};
\draw(-1.73,-1) circle (0.5cm); \draw(-1.73,-1) node {$+$};
\end{tikzpicture}
\caption{Ising spins on hexagonal lattice and two  boundary types.}
\end{figure}

Consider again the zero-field situation and the symmetric Gibbs measure $\P$.
Let $P_i$, $i=1,2$, denote the probability of the corresponding boundary type.
Thus we have
\begin{equation}\label{hex1}
2P_1 + 6 P_2 =1.
\end{equation}
For each quadruple $(\sigma_{I},\sigma_{II},\sigma_{III},\sigma_{C})\in\{-1,1\}^4$,
one has
$$\aligned
\P(\sigma_{I},\sigma_{II},\sigma_{III},\sigma_{C})&=\P(\sigma_{C}|\sigma_{I},\sigma_{II},\sigma_{III})\P(\sigma_{I},\sigma_{II},\sigma_{III})\\
&=\frac {e^{J\sigma_C(\sigma_{I}+\sigma_{II}+\sigma_{III})}}{2\cosh\left( J(\sigma_{I}+\sigma_{II}+\sigma_{III})\right)}P_i,
\endaligned
$$
where the boundary type $i$ is determined by $(\sigma_{I},\sigma_{II},\sigma_{III})$.

Consider now the  correlation function $\E(\sigma_I\sigma_{C})$,
$$
\E(\sigma_I\sigma_{C})=\sum_{(\sigma_{I},\sigma_{II},\sigma_{III},\sigma_{C})\in\{-1,1\}^4}
\sigma_{I}\sigma_{C}\P(\sigma_{I},\sigma_{II},\sigma_{III},\sigma_{C}),
$$ 
and after a straightforward manipulation we conclude  that
$$
\E(\sigma_I\sigma_{C})= P_1\tanh(3J)+P_2\tanh(J).
$$ 
Hence,
\begin{equation}\label{HC0}
P_1=\frac {6\E(\sigma_I\sigma_{C})-\tanh(J)}{6\tanh(3J)-2\tanh(J)},
\quad
P_2=\frac {\tanh(3J)-2\E(\sigma_I\sigma_{C})}{6\tanh(3J)-2\tanh(J)},
\end{equation}
and the erasure entropy $\h^{-}_0(\P)$ 
\begin{equation}\label{HC1}\aligned
\h^{-}_0(\P)&= 2P_1\cdot h\Bigl(\frac {e^{3J}}{e^{3J}+e^{-3J}}\Bigr)+6P_2\cdot h\Bigl(\frac {e^{J}}{e^{J}+e^{-J}}\Bigr)
\endaligned
 \end{equation}
can be expressed in terms of the correlation function $\E(\sigma_{I}\sigma_{C})$.

In order to obtain the expression for $\E(\sigma_{I}\sigma_{C})$ we recall that the zero-field
Ising model on the hexagonal lattice is an exactly solvable model, and the
pressure function at inverse temperature $\beta$ is given by 
\begin{equation}\label{HC2}\aligned
p(\beta) =\frac 34\log 2+ \frac 1{16\pi^2}\int_{0}^{2\pi}\int_{0}^{2\pi}
\log \Bigl( \cosh^3&(2\beta J)+1-\sinh^2(2\beta J)\times\\
&\Bigl[\cos(\theta_1-\theta_2)+\cos(\theta_1)+\cos(\theta_2)\Bigr]\Bigr)d\theta_1
d\theta_2,
\endaligned
 \end{equation}
see e.g. \cite[eq. (A.6). p.322]{Sy}. The pressure function $p(\beta)$ 
is an everywhere  differentiable
function of $\beta$; the phase transition in the Ising model on hexagonal 
lattice is of  second order.

Finally, as  is well known \cite{Geo,Rue}, Gibbs states are tangent functionals
to the pressure, and hence, if the interaction $\Phi=\{\Phi_\Lambda\}$  
is such that the pressure function $p(\beta)=P(\beta\Phi)$ is
differentiable as a function of $\beta$ at $\beta=\beta_0$, then
\begin{equation}\label{HC3}
p'(\beta_0) = \int \Bigl( -\sum_{\Lambda\ni 0} \frac 1{|\Lambda|} \Phi_\Lambda(\sigma) \Bigr) \P(d\sigma),
 \end{equation}
where $\P$ is a Gibbs state for the Hamiltonian $\beta H=\beta\sum_{\Lambda}\Phi_\Lambda$.
 In case of the Ising model on the hexagonal lattice, we thus conclude that 
\begin{equation}\label{HC4}
 p'(1)= -\frac {3J}{2} \, \E(\sigma_I\sigma_{C}) .
 \end{equation}
 Combining (\ref{HC4}), (\ref{HC2}), (\ref{HC0}), and (\ref{HC1}), we arrive to an analytic
 expression for $\h_{0}^{-}(\P)$. Similarly to the case of rectangular lattice,
 the erasure entropy is independent of the phase.

{\bf Final remarks} A possible generalisation rests on the fact that the coupling constants in the three different directions are allowed to be different, without affecting the solubility of the model. In that case the corresponding correlation functions need to be separately computed.

Also, the next nearest-neighbour correlation functions  
$\E(\sigma_I \sigma_{II})$ etc, can be computed as the 
{\em nearest neighbour} correlation functions 
of an Ising model on a triangular lattice, related to the hexagonal model 
by  a star-triangle transformation, see again \cite{Sy}.  This offers a
slightly different, though essentially equivalent, route to computing 
exact expressions for the Verd\`u-Weissman erasure entropy  

{\bf Acknowledgements} We thank J.H.H. Perk for helpful advice
about the literature and M. Khatun for a helpful correspondence.

 \begin{bibdiv}
 \begin{biblist}
\bib{BTKM}{article}{
author={J.H. Barry}, 
author={T. Tanaka}, 
author={M. Khatun}, 
author={ C.H. M\'unera}, 
title= {Exact solutions for  Ising-model odd-number correlations on planar lattices},
journal={Phys. Rev.B},
volume={ 44}, 
date={1991},
pages={2595--2608},
}
\bib{BMThoney}{article}{
author={J.H. Barry}, 
author={C. H. M\'unera},
author={T. Tanaka}, 
title= {Exact solutions for Ising model odd-number correlations on the honeycomb and triangular lattices},
journal={Physica A: Statistical and Theoretical Physics},
volume={113},
number={3}, 
date={1982},
pages={367--387},
}

\bib{BKT}{article}{
   author={Barry, J. H.},
   author={Khatun, M.},
   author={Tanaka, T.},
   title={Exact solutions for Ising-model even-number correlations on planar
   lattices},
   journal={Phys. Rev. B (3)},
   volume={37},
   date={1988},
   number={10},
   pages={5193--5204},
   issn={0163-1829},
   review={\MR{933704 (89b:82102)}},
}

\bib{BDS}{article}{
   author={Burton, R. M.},
   author={Denker, M.},
   author={Smorodinsky, M.},
   title={Finite state bilaterally deterministic strongly mixing processes},
   journal={Israel J. Math.},
   volume={95},
   date={1996},
   pages={115--133},
   issn={0021-2172},
   review={\MR{1418290 (98b:60069)}},
}
\bib{BR}{book}{
    author={Bratteli, O.},
    author={Robinson, D.W.},
    title ={Operator Algebras and Quantum Statistical Mechanics, vol.II, second edition},
    pages={274--284}, 
publisher={Springer},
 }
 
\bib{G}{article}{
   author={Gurevi{\v{c}}, B. M.},
   title={One- and two-sided regularity of stationary random processes},
   language={Russian},
   journal={Dokl. Akad. Nauk SSSR},
   volume={210},
   date={1973},
   pages={763--766},
   issn={0002-3264},
   review={\MR{0322939 (48 \#1299)}},
}
\bib{vE}{article}{
   author={van Enter, A. C. D.},
   title={ On a Question of Bratteli and Robinson},
   journal={ Lett.Math.Phys.},
   volume= {6},
   date={ 1982},
   pages={289-291},
}
\bib{EFS}{article}{
   author={van Enter, A. C. D.},
   author={Fern\'andez, R.},
   author={Sokal, A.D.},
title={Regularity Properties and Pathologies of Position-Space renormalization-Group transformations: Scope and Limitations of Gibbsian Theory},
journal={J. Stat.Phys.},
volume={72},
date={1993},
pages={879-1169},
}
\bib{FP-S}{article}{
author={J. Fernando Perez},
author={R.H. Schonmann},
title={On the global character of some restricted equilibrium conditions - a remark on metastability},
journal={J. Stat.Phys.},
volume={28},
date={1982},
pages={479-485},
}

\bib{FS}{article}{
author={F\"ollmer, H.},
author={Snell, J.L}, 
title={An ``Inner'' Variational Principle for Markov Fields.},
journal={Prob.Th. Rel. Fields},
volume={ 39}, 
pages={187--195},
date={ 1977},
}
\bib{Geo}{book}{
   author={Georgii, H.-O.},
   title={Gibbs measures and phase transitions},
   series={de Gruyter Studies in Mathematics},
   volume={9},
   publisher={Walter de Gruyter \& Co.},
   place={Berlin},
   date={1988},
   pages={xiv+525},
   isbn={0-89925-462-4},
   review={\MR{956646 (89k:82010)}},
}

\bib{KBT}{article}{
 author={Khatun, M.},
   author={Barry, J. H.},
   author={Tanaka, T.},
   title={Exact solutions for even-numbered correlations of the square Ising model},
   journal={Phys. Rev. B },
   volume={42},
   date={1990},
   pages={4398--4406},
}

\bib{dHS}{article}{
author={F. den Hollander},
author={J.E. Steif},
title={On the equivalence of certain ergodic properties for Gibbs states},
journal={Ergodic Theory and Dynamical Systems},
Volume={20},
date={2000},
pages={231-239},
}

\bib{OW}{article}{
   author={Ornstein, D. S.},
   author={Weiss, B.},
   title={Every transformation is bilaterally deterministic},
   note={Conference on Ergodic Theory and Topological Dynamics (Kibbutz
   Lavi, 1974)},
   journal={Israel J. Math.},
   volume={21},
   date={1975},
   number={2-3},
   pages={154--158},
   issn={0021-2172},
   review={\MR{0382600 (52 \#3482)}},
}

\bib{Rue}{book}{
   author={Ruelle, D.},
   title={Thermodynamic formalism},
   series={Encyclopedia of Mathematics and its Applications},
   volume={5},
   note={The mathematical structures of classical equilibrium statistical
   mechanics;
   With a foreword by Giovanni Gallavotti and Gian-Carlo Rota},
   publisher={Addison-Wesley Publishing Co., Reading, Mass.},
   date={1978},
   pages={xix+183},
   isbn={0-201-13504-3},
   review={\MR{511655 (80g:82017)}},
}
\bib{S}{article}{
author={Sewell, G.~L.},
title={Statistical mechanical theory of metastable states},
journal={Lett. Nuovo Cim.},
volume={10},
pages={ 430--434}, 
date={1974},
}

\bib{S2}{article}{
author={Sewell, G.~L.},
title={Stability, equilibrium and  metastability in statistical mechanics},
journal={Phys.Rep.},
volume={57},
pages={307-342},
date={1980},
}

\bib{Sy}{article}{
author={Syozi, I.},
title={Transformations of Ising models},
book={
series={Phase Transitions and Critical Phenomena},
volume={1},
editor={C.Domb},
editor={M.~S.~Green},
date={1972},
},
}
\bib{T}{book}{
   author={Tempelman, A.A.},
   title={Ergodic theorems for group actions},
   series={Mathematics and its Applications},
   volume={78},
   note={Informational and thermodynamical aspects;
   Translated and revised from the 1986 Russian original},
   publisher={Kluwer Academic Publishers Group},
   place={Dordrecht},
   date={1992},
   pages={xviii+399},
   isbn={0-7923-1717-3},
   review={\MR{1172319 (94f:22007)}},
}

\bib{T2}{article}{
   author={Tempelman, A. A.},
   title={Specific characteristics and variational principle for homogeneous
   random fields},
   journal={Z. Wahrsch. Verw. Gebiete},
   volume={65},
   date={1984},
   number={3},
   pages={341--365},
   issn={0044-3719},
   review={\MR{731226 (85m:60158)}},
}

\bib{VW}{article}{
title={Erasure entropies},
author={Verd\`u, S.},
author={Weissman, T.},
journal={Proceedings ISIT 2006},
pages={98--102},
}
\bib{VW2}{article}{
title={The Information Lost in Erasures},
author={Verd\`u, S.},
author={Weissman, T.},
journal={IEEE Trans. Inf. Theory},
volume={54},
number={11},
year={2008},
pages={5030-5058},
}

 \end{biblist}
 \end{bibdiv}
\end{document}